\newtheorem{theorem}{Theorem}
\title{Constraints on Reversing the Thermodynamic Arrow of Time\\
from Black Hole Thermodynamics, Wormholes,\\
and Time-Symmetric Quantum Mechanics}
\author[1]{Kevin Song}
\author[1]{John Zhang}
\affil[1]{\small Department of Biomedical Engineering\protect\\ The University of Alabama at Birmingham\protect\\ Birmingham, AL 35294, USA}
\date{December 2, 2025}
\begin{document}

\maketitle

\begin{abstract}
Can the thermodynamic arrow of time in a single universe be reversed, even temporarily, within semiclassical gravity without invoking additional universes or branches? We address this question in a single, connected spacetime where quantum field theory is coupled to classical general relativity, and where black holes, traversable wormholes, and time-symmetric or retrocausal formulations of quantum mechanics might naively appear to open channels for entropy export or cancellation. After distinguishing fine-grained, coarse-grained, and generalized gravitational entropy, and formulating a cosmological coarse-grained entropy, we treat black hole evaporation, wormholes constrained by quantum energy inequalities, and two-time boundary-value frameworks (including absorber-type and two-state-vector formalisms) within a common information-theoretic language. We then introduce a ``Global Entropy Transport'' (GET) framework and derive a sectoral inequality that bounds the net decrease of matter-plus-radiation entropy in terms of changes in horizon area and correlation (mutual-information) terms, assuming the generalized second law and modern focusing and energy conditions. Within this framework, black holes, wormholes, and retrocausal protocols can at most redistribute entropy among matter, radiation, and gravitational sectors and reshape the local pattern of entropy production. They do not, under current semiclassical, holographic, and statistical-mechanical constraints, permit a genuine reversal of the universal thermodynamic arrow in a single connected universe.
\end{abstract}

\newpage

\doublespacing

\section{Introduction}

The second law of thermodynamics, generalized to gravitational systems, states that appropriate entropy functionals for isolated systems, and for spacetimes with horizons, are nondecreasing under physically allowed processes.
In a cosmological context, this motivates the question: can the thermodynamic arrow of time for the universe be reversed, even for a finite interval, without appealing to additional universes or branches?

In a semiclassical regime where quantum field theory (QFT) is coupled to classical general relativity (GR), black hole thermodynamics and the generalized second law (GSL) provide strong evidence that a suitably defined generalized entropy cannot decrease \cite{Bekenstein1974,Hawking1975,Bardeen1973,Wald1999,Wall2012,Bousso2016}.
At the same time, solutions of Einstein's equations that contain wormholes or closed timelike curves, together with retrocausal or time-symmetric interpretations of quantum mechanics, suggest qualitatively new channels for energy and information flow \cite{MorrisThorne1988,MorrisThorneYurtsever1988,XianZhao2020,WheelerFeynman1945,WheelerFeynman1949,Cramer1986,Aharonov1964,EvansSearles2002}.
It is therefore natural to ask whether any of these structures, consistently embedded in a single connected universe, can be used to reduce a physically meaningful universal entropy.

The purpose of this paper is to formulate this question precisely and to show why the answer appears to be negative under broad but explicit assumptions.
We restrict attention to a single globally hyperbolic spacetime solving the semiclassical Einstein equations with suitable energy conditions in the averaged quantum sense, and to unitary QFT on that background.
We do not assume any ensemble of universes or Many-Worlds branching; our aim is to understand whether, given a fixed universe, clever use of black holes, wormholes, or retrocausal protocols could lower its thermodynamic entropy.

To do that, we first clarify which notion of entropy is relevant.
Fine-grained von Neumann entropy is conserved in closed-unitary evolution and cannot decrease or increase.
Thermodynamic discussions instead invoke coarse-grained entropy, defined by grouping microstates into macrostates determined by experimentally accessible variables.
In GR, horizons motivate a further generalized entropy which combines horizon area with the entropy of quantum fields outside, and which in holographic setups is computed by quantum extremal surfaces \cite{Wald1999,EngelhardtWall2015}.

We then analyze three classes of structures in a single-universe setting: black holes and horizon thermodynamics governed by the four laws of black hole mechanics and Hawking radiation; traversable wormholes and their stress-energy, constrained by quantum inequalities and modern quantum energy conditions; and time-symmetric quantum mechanical frameworks, including Wheeler--Feynman absorber theory, the Transactional Interpretation, and the Aharonov--Bergmann--Lebowitz (ABL) two-state formalism.
Building on these ingredients, we propose a ``Global Entropy Transport'' (GET) picture in which entropy is shuffled between matter, radiation, and gravitational sectors via nonlocal correlations and advanced or retarded field structures.
Our central technical result is a sectoral inequality (Theorem~\ref{thm:GET}), proved in Sec.~\ref{sec:GET}, that quantifies how much entropy can be removed from non-gravitational sectors without violating the GSL.

Our discussion is complementary to Past Hypothesis accounts of the cosmological arrow of time, in which a special low-entropy condition at (or near) the Big Bang underwrites the observed thermodynamic arrow \cite{Albert2000,Price1996,CarrollChen2004,Carroll2010,Penrose1979}.
We do not seek to explain that boundary condition.
Instead, we assume such a low-entropy past and ask what an internal agent can or cannot accomplish dynamically afterward, once black hole thermodynamics and semiclassical gravity are taken seriously.

Throughout, we use units with $c = k_{\mathrm B} = 1$ and write $\ell_{\mathrm P}^2 = G\hbar$.

\subsection{Assumptions and regime of validity}

Our conclusions are conditional on a standard set of semiclassical and holographic assumptions:

\begin{itemize}
  \item A single globally hyperbolic spacetime solving the semiclassical Einstein equations with suitable averaged quantum energy conditions.
  \item Unitary QFT for matter and radiation on that background; no fundamental non-unitary collapse dynamics.
  \item Validity of the GSL for the generalized entropy of relevant horizons, and of the Quantum Focusing Conjecture (QFC) and Quantum Null Energy Condition (QNEC) in the regimes where we invoke them \cite{Wall2012,Bousso2016,BoussoQNEC2016}.
  \item In holographic discussions, applicability of the quantum extremal surface (QES) prescription and the associated island formula to the setups under consideration \cite{EngelhardtWall2015,Almheiri2019,Almheiri2020,Penington2020}.
\end{itemize}

Within this framework we treat backreaction and horizon areas semiclassically, and use generalized entropy and QES technology as proxies for the underlying microscopic gravitational entropy.

\section{Entropy in an isolated semiclassical universe}

\subsection{Fine-grained and coarse-grained entropy}

Consider a closed quantum system with Hilbert space $\mathcal{H}$ and density operator $\rho(t)$ evolving unitarily under some Hamiltonian $H$.
The fine-grained von Neumann entropy
\begin{equation}
  S_{\mathrm{vN}}[\rho] = - \operatorname{Tr}\left( \rho \ln \rho \right)
\end{equation}
is conserved in time:
\begin{equation}
  \frac{\mathrm{d}}{\mathrm{d}t} S_{\mathrm{vN}}[\rho(t)] = 0,
\end{equation}
because the spectrum of $\rho$ is invariant under unitary evolution.

Irreversibility arises only after coarse-graining.
Let $\{P_\alpha\}$ be a set of mutually orthogonal projectors that encode experimentally accessible macroscopic observables, such that $\sum_{\alpha} P_\alpha = \mathbb{I}$ and $P_\alpha P_\beta = \delta_{\alpha\beta} P_\alpha$.
The coarse-grained state is
\begin{equation}
  \rho_{\mathrm{cg}} = \sum_{\alpha} P_\alpha \rho P_\alpha,
\end{equation}
and the associated coarse-grained entropy
\begin{equation}
  S_{\mathrm{cg}} = - \operatorname{Tr}\left( \rho_{\mathrm{cg}} \ln \rho_{\mathrm{cg}} \right)
\end{equation}
can increase with time when the microscopic dynamics mixes microstates within each macrostate, effectively spreading the support of $\rho$ over a larger number of macrostates.

In classical statistical mechanics, this corresponds to Boltzmann's entropy
\begin{equation}
  S_{\mathrm{B}} = \ln W,
\end{equation}
where $W$ is the phase space volume of microstates compatible with a given macrostate.
Liouville's theorem guarantees that the phase space volume of the full fine-grained distribution is invariant.
Coarse-graining suppresses small-scale structure, allowing the coarse-grained entropy to rise.

\subsection{Coarse-grained thermodynamic entropy for a cosmological spacetime}

For a spatially homogeneous and isotropic universe described by a Friedmann--Robertson--Walker (FRW) metric
\begin{equation}
  \mathrm{d}s^2 = - \mathrm{d}t^2 + a(t)^2 \left( \frac{\mathrm{d}r^2}{1 - k r^2} + r^2 \mathrm{d}\Omega_2^2 \right),
\end{equation}
with scale factor $a(t)$ and curvature parameter $k = 0, \pm 1$, we can define a coarse-grained thermodynamic entropy
\begin{equation}
  S_{\mathrm{th}}(t) = \sum_i s_i(T_i(t),\mu_i(t)) V_i(t),
\end{equation}
where $s_i$ is the entropy density of each component $i$ (radiation, baryonic matter, dark matter, etc.) as a function of local temperature $T_i$ and chemical potentials $\mu_i$, and $V_i$ is the comoving volume occupied by that component.

Gravitational degrees of freedom are not captured by this sum.
For de~Sitter-like accelerating phases, one associates an entropy to the cosmological horizon,
\begin{equation}
  S_{\mathrm{dS}} = \frac{A_{\mathrm{dS}}}{4 G \hbar},
\end{equation}
in direct analogy with black hole entropy \cite{GibbonsHawking1977}.
Penrose has argued that the low entropy of the early universe is tied to the smallness of the Weyl curvature and the absence of gravitational clumping, a ``gravitational Past Hypothesis'' \cite{Penrose1979}.
More recent estimates of the universe's entropy budget highlight the dominance of black holes and cosmological horizons as repositories of gravitational entropy \cite{EganLineweaver2010}.

In the presence of horizons, a more robust quantity than $S_{\mathrm{th}}$ is the generalized entropy
\begin{equation}
  S_{\mathrm{gen}} = \frac{A}{4 G \hbar} + S_{\mathrm{out}},
\end{equation}
where $A$ is the area of a causal horizon cross-section and $S_{\mathrm{out}}$ is the von Neumann entropy of quantum fields outside the horizon \cite{Bekenstein1974,Wald1999,Wall2012,EngelhardtWall2015}.
We take $S_{\mathrm{gen}}$ as the primary candidate for a universal thermodynamic entropy in a semiclassical universe, and later connect it to quantum extremal surfaces and island constructions.

\section{Black hole thermodynamics and the generalized second law}

\subsection{Four laws and Bekenstein--Hawking entropy}

For stationary black holes in GR, the four laws of black hole mechanics mirror the ordinary laws of thermodynamics \cite{Bardeen1973,Wald1997Review}.

Let $M$ be the mass, $J$ the angular momentum, $Q$ the charge, $\kappa$ the surface gravity of the horizon, $A$ the horizon area, and $\Omega_{\mathrm{H}}$, $\Phi_{\mathrm{H}}$ the angular velocity and electric potential at the horizon.
The first law reads
\begin{equation}
  \delta M = \frac{\kappa}{8 \pi G} \delta A + \Omega_{\mathrm{H}} \, \delta J + \Phi_{\mathrm{H}} \, \delta Q.
\end{equation}
The zeroth law states that $\kappa$ is constant over the event horizon of a stationary black hole.
The second law (Hawking area theorem) states that in any classical process satisfying the null energy condition,
\begin{equation}
  \delta A \geq 0,
\end{equation}
and the third law forbids reaching $\kappa = 0$ in a finite advanced time.

Hawking's calculation that black holes radiate thermally with temperature \cite{Hawking1975}
\begin{equation}
  T_{\mathrm{H}} = \frac{\hbar \kappa}{2 \pi}
\end{equation}
motivates the identification of black hole entropy as
\begin{equation}
  S_{\mathrm{BH}} = \frac{A}{4 G \hbar}.
\end{equation}
This identification fits the first law in the form $\delta M = T_{\mathrm{H}} \delta S_{\mathrm{BH}} + \cdots$ and is supported by microstate counting in several quantum gravity models.

\subsection{Generalized second law and quantum focusing}

Once quantum fields are included, Hawking radiation can decrease $A$, leading to apparent violations of the classical area theorem.
Bekenstein proposed the generalized second law (GSL) \cite{Bekenstein1974}:
\begin{equation}
  \Delta S_{\mathrm{BH}} + \Delta S_{\mathrm{out}} \geq 0,
\end{equation}
where $S_{\mathrm{out}}$ is the entropy outside the horizon.
Equivalently, the generalized entropy
\begin{equation}
  S_{\mathrm{gen}} = \frac{A}{4 G \hbar} + S_{\mathrm{out}}
\end{equation}
is nondecreasing along appropriate slices of the horizon.

Modern proofs of the GSL in semiclassical regimes, for a wide class of horizons and matter couplings, have been given by Wall and others \cite{Wall2012,Wall2010}.
In parallel, Bousso and collaborators formulated the Quantum Focusing Conjecture (QFC) \cite{Bousso2016}, which introduces a quantum generalization of the expansion of null congruences.
Let $\sigma$ be a codimension-two surface, $N$ a null hypersurface orthogonal to $\sigma$, and $S_{\mathrm{gen}}(\lambda)$ the generalized entropy of a cut of $N$ at affine parameter $\lambda$.
The quantum expansion is defined as
\begin{equation}
  \Theta(\lambda) = \frac{4 G \hbar}{\sqrt{h}} \frac{\delta S_{\mathrm{gen}}}{\delta \lambda},
\end{equation}
where $\sqrt{h}$ is the determinant of the induced metric on $\sigma$.
The QFC asserts that
\begin{equation}
  \frac{\mathrm{d}\Theta}{\mathrm{d}\lambda} \leq 0,
\end{equation}
which in many situations implies a local form of the GSL and bounds the way generalized entropy can evolve along null surfaces.

Related progress on quantum energy conditions, such as the quantum null energy condition (QNEC), further constrains allowed stress tensors and helps to link entropy inequalities to local energy densities \cite{BoussoQNEC2016}.
Taken together, these results strongly support the statement that any physically admissible process in semiclassical gravity obeys
\begin{equation}
  \Delta S_{\mathrm{gen}} \geq 0
\end{equation}
for suitable choices of horizons and slicing.
Any attempt to decrease universal entropy in a controlled way must therefore either violate the GSL, invalidate semiclassical approximations, or rely on very special boundary conditions.

\section{Traversable wormholes, negative energy, and chronology protection}

\subsection{Morris--Thorne wormholes and energy conditions}

A static, spherically symmetric traversable wormhole can be modeled by the Morris--Thorne metric \cite{MorrisThorne1988}
\begin{equation}
  \mathrm{d}s^2 = - e^{2\Phi(r)} \mathrm{d}t^2 + \frac{\mathrm{d}r^2}{1 - b(r)/r} + r^2 \mathrm{d}\Omega_2^2,
\end{equation}
where $\Phi(r)$ is a redshift function and $b(r)$ is a shape function satisfying $b(r_0) = r_0$ at the throat radius $r_0$.
To avoid horizons, $\Phi(r)$ must be finite everywhere.
The flaring-out condition at the throat requires $b'(r_0) < 1$.

Using Einstein's equations $G_{\mu\nu} = 8\pi G T_{\mu\nu}$, one finds that the radial null energy condition (NEC),
\begin{equation}
  T_{\mu\nu} k^\mu k^\nu \geq 0
\end{equation}
for all null vectors $k^\mu$, is violated near the throat \cite{MorrisThorne1988,MorrisThorneYurtsever1988}.
For example, choosing a radial null vector $k^\mu$ with components $(1,\sqrt{1 - b(r)/r},0,0)$, one obtains at the throat
\begin{equation}
  T_{\mu\nu} k^\mu k^\nu \propto \frac{b'(r_0) - 1}{r_0^2} < 0.
\end{equation}
Maintaining a macroscopic traversable wormhole therefore requires ``exotic'' matter with negative energy density in some frame.

\subsection{Quantum inequalities and constraints on macroscopic wormholes}

Quantum field theory allows negative energy densities in certain states, such as squeezed vacuum and Casimir configurations, but these are constrained by quantum inequalities (QI) that limit the magnitude and duration of negative energy along timelike or null worldlines \cite{FordReview2009,FordQICurved1998}.
A typical QI has the schematic form
\begin{equation}
  \int_{-\infty}^{\infty} \mathrm{d}\tau\, f(\tau) \langle T_{\mu\nu} u^\mu u^\nu \rangle \geq - \frac{C}{\tau_0^4},
\end{equation}
where $\tau$ is proper time along a timelike geodesic with tangent $u^\mu$, $f(\tau)$ is a sampling function of width $\tau_0$, and $C$ is a positive constant depending on the field content.
Such bounds imply that sustaining a large, macroscopic region of negative energy for long times is extremely difficult.

Applied to wormhole geometries, QI bounds and related quantum energy conditions suggest that either the throat radius is Planckian or the amount of exotic matter must be enormous and arranged with extreme precision \cite{FordQICurved1998,BoussoQNEC2016}.
This severely constrains the viability of traversable wormholes engineered as macroscopic thermodynamic devices.

\subsection{Chronology protection and wormhole heat flow}

If a traversable wormhole connects two regions of spacetime with a relative time shift, it can generate closed timelike curves (CTCs).
Hawking's chronology protection conjecture \cite{Hawking1992} asserts that quantum effects will conspire to prevent the formation of macroscopic CTCs, for example by causing stress-energy to diverge near the would-be Cauchy horizon.

More recently, Xian and Zhao studied heat flow through a wormhole connecting two black holes in an AdS$_2$-dual setup, starting from a thermofield double state.
They found that, while a wormhole channel can produce anomalous heat flow from colder to hotter side at the cost of consuming entanglement, the net thermodynamic arrow is preserved once all channels are included \cite{XianZhao2020}.
In particular, the wormhole does not generate a macroscopic reversal of the second law, but rather redistributes correlations between the two sides.
This is naturally interpreted as a realization of Global Entropy Transport: subsystem entropies and apparent heat flows can be nonstandard, but once horizon and entanglement contributions are included the total generalized entropy remains nondecreasing.

These results, together with QI bounds, imply that wormholes cannot be used as ``entropy drains'' that export entropy away from an otherwise isolated universe to lower its generalized entropy.
At best, they provide a nontrivial way to re-route energy and entanglement while respecting the GSL.

\section{Time-symmetric quantum mechanics and entropy production}

\subsection{Wheeler--Feynman absorber theory and advanced fields}

The Wheeler--Feynman absorber theory reformulates classical electrodynamics in terms of direct interparticle action, eliminating free fields and treating the electromagnetic field as a bookkeeping device \cite{WheelerFeynman1945,WheelerFeynman1949}.
Each charged particle emits a half-advanced, half-retarded field; the presence of a complete absorber in the future enforces boundary conditions such that the net observed field is purely retarded.

In modern notation, the total field at spacetime point $(\mathbf{x},t)$ can be written schematically as
\begin{equation}
  A^{\mu}_{\mathrm{tot}}(x) = \frac{1}{2} \sum_n \left( A^{\mu}_{n,\mathrm{ret}}(x) + A^{\mu}_{n,\mathrm{adv}}(x) \right),
\end{equation}
with the absorber condition arranged so that the homogeneous solution
\begin{equation}
  A^{\mu}_{\mathrm{free}}(x) = \frac{1}{2} \sum_n \left( A^{\mu}_{n,\mathrm{ret}}(x) - A^{\mu}_{n,\mathrm{adv}}(x) \right)
\end{equation}
vanishes.
The resulting effective dynamics is time-asymmetric at the level of observed macroscopic fields, even though the underlying equations are time-symmetric.

From a thermodynamic perspective, the absorber condition is a special final boundary condition that selects retarded behavior and aligns the electromagnetic arrow of time with the thermodynamic arrow, rather than reversing it.
The microscopic phase space or Hilbert space evolution remains measure-preserving, so coarse-grained entropy still tends to increase in the usual way.

\subsection{Transactional Interpretation and two-state vector formalism}

Cramer's Transactional Interpretation (TI) recasts quantum processes as ``transactions'' between retarded offer waves and advanced confirmation waves, forming a time-symmetric handshake between emitters and absorbers \cite{Cramer1986}.
Mathematically, this can be related to the two-state vector formalism (TSVF) of Aharonov, Bergmann, and Lebowitz \cite{Aharonov1964}, where a system between two measurements is described by a pair of state vectors, one evolving forward in time from the past boundary and one backward in time from the future boundary.

In TSVF, the probability for an outcome at an intermediate time is given in terms of both the pre-selected state $|\psi_i\rangle$ and the post-selected state $|\psi_f\rangle$.
The ABL rule for a nondegenerate measurement of observable $A$ with projectors $\{P_a\}$ is
\begin{equation}
  \Pr(a | \psi_i, \psi_f) = \frac{|\langle \psi_f | P_a | \psi_i \rangle|^2}{\sum_b |\langle \psi_f | P_b | \psi_i \rangle|^2}.
\end{equation}
Despite its time-symmetric form, the underlying unitary dynamics is still governed by the Schr\"odinger equation.
Entropy production in macroscopic systems, derived from coarse-graining and typicality, is not altered by the reinterpretation of amplitudes.

\subsection{Fluctuation theorems and the arrow of time}

Fluctuation theorems quantify the probability of entropy-decreasing trajectories in small systems over finite times \cite{EvansSearles2002}.
For entropy production rate $\overline{\Sigma}_t$ averaged over time $t$, a canonical form of the fluctuation theorem is
\begin{equation}
  \frac{\Pr\left( \overline{\Sigma}_t = A \right)}{\Pr\left( \overline{\Sigma}_t = -A \right)} = e^{A t},
\end{equation}
for suitable definitions of $\overline{\Sigma}_t$ in classical or quantum systems.
For example, in a driven system initially in equilibrium at temperature $T$, the Evans--Searles or Crooks relations give, schematically,
\begin{equation}
  \frac{P(\Delta S_{\mathrm{tot}} = \sigma)}{P(\Delta S_{\mathrm{tot}} = -\sigma)}
  = e^{\sigma},
\end{equation}
where $\Delta S_{\mathrm{tot}}$ is the total entropy production (in units of $k_{\mathrm B}$) over a protocol.
Entropy-decreasing fluctuations ($\sigma < 0$) are exponentially suppressed for large systems and long times, reproducing the second law in the thermodynamic limit.

Time-symmetric interpretations like TI or TSVF do not modify these relations, since they are derived from the underlying unitary dynamics and the choice of initial ensembles.
Retrocausal narratives may offer alternative stories about how particular outcomes are coordinated by both past and future boundary conditions, but they do not furnish a mechanism to globally bias the statistics toward entropy-decreasing histories without altering the boundary conditions themselves.
In particular, no reinterpretation of amplitudes can change the fluctuation-theorem ratio without changing the microscopic dynamics or the ensemble preparation.

\section{Global Entropy Transport in a single universe}
\label{sec:GET}

\subsection{Conceptual setup}

We now introduce a single-universe framework for what we call Global Entropy Transport (GET).
The idea is to formalize processes in which entropy is shuffled among sectors in a way that might appear, to some observers, as partial reversal of the thermodynamic arrow in a subsystem, while the generalized entropy of appropriate horizons still increases.

Let the total Hilbert space factorize as
\begin{equation}
  \mathcal{H} = \mathcal{H}_{\mathrm{grav}} \otimes \mathcal{H}_{\mathrm{m}} \otimes \mathcal{H}_{\mathrm{rad}},
\end{equation}
where $\mathcal{H}_{\mathrm{grav}}$ encodes geometric and horizon degrees of freedom, and $\mathcal{H}_{\mathrm{m}},\mathcal{H}_{\mathrm{rad}}$ encode matter and radiation.
In reality, such a tensor product factorization is subtle in quantum gravity, but it is a useful semiclassical approximation.

The fine-grained entropy of the total state remains constant:
\begin{equation}
  S_{\mathrm{vN}}[\rho_{\mathrm{tot}}] = \text{const}.
\end{equation}
We define subsystem entropies $S_{\mathrm{grav}}, S_{\mathrm{m}}, S_{\mathrm{rad}}$ by tracing out complementary factors, and a generalized entropy
\begin{equation}
  S_{\mathrm{gen}} = \frac{A}{4 G \hbar} + S_{\mathrm{out}}[\rho_{\mathrm{m}},\rho_{\mathrm{rad}}],
\end{equation}
where $A$ is the area of the relevant horizon and $S_{\mathrm{out}}$ is the entropy of fields outside the horizon.

It is convenient to write
\begin{equation}
  S_{\mathrm{out}} = S_{\mathrm{m}} + S_{\mathrm{rad}} + S_{\mathrm{corr}},
\end{equation}
where $S_{\mathrm{corr}}$ encodes correlations and entanglement between matter and radiation.
Equivalently,
\begin{equation}
  S_{\mathrm{corr}} = S_{\mathrm{out}} - S_{\mathrm{m}} - S_{\mathrm{rad}} = - I(\mathrm{m}:\mathrm{rad}) \le 0,
\end{equation}
with $I(\mathrm{m}:\mathrm{rad})$ the mutual information between matter and radiation.

Global Entropy Transport is any process in which the entropies of accessible non-gravitational sectors decrease,
\begin{align}
  \Delta S_{\mathrm{m}} &< 0, \\
  \Delta S_{\mathrm{rad}} &< 0,
\end{align}
in some region or time interval, while the generalized entropy still obeys the GSL,
\begin{equation}
  \Delta S_{\mathrm{gen}} \geq 0.
\end{equation}
Local or sectoral entropy may then decrease, but only at the expense of increased entanglement with gravitational degrees of freedom and a compensating increase of horizon area or of $S_{\mathrm{corr}}$ elsewhere.

\subsection{Black holes as entropy sinks and sources}

Black holes provide a concrete realization of GET.
Dropping a low-entropy object into a black hole increases its area and thus $S_{\mathrm{BH}}$, while reducing the entropy in the exterior region.
The GSL ensures that
\begin{equation}
  \Delta S_{\mathrm{BH}} + \Delta S_{\mathrm{out}} \geq 0.
\end{equation}
From the perspective of an exterior observer who cannot access interior degrees of freedom, the black hole functions as an entropy sink.

Conversely, Hawking radiation gradually transfers entropy back to the exterior.
For a Schwarzschild black hole of mass $M$, the Hawking temperature is
\begin{equation}
  T_{\mathrm{H}} = \frac{\hbar}{8 \pi G M},
\end{equation}
and the power radiated scales as $P \sim A T_{\mathrm{H}}^4 \sim M^{-2}$.
The total entropy radiated during complete evaporation is of order $M^2$ in Planck units, comparable to the initial $S_{\mathrm{BH}}$.
Thus, from a global perspective, the black hole is both a sink and a source, mediating redistribution of entropy between geometric and field sectors rather than destroying it.

\subsection{Wormhole-mediated energy and entropy transport}

In a hypothetical traversable wormhole connecting two distant regions, matter and radiation can be transported between the mouths in ways that may appear to reverse local entropy gradients.
However, as Xian and Zhao demonstrated in a controlled AdS$_2$ setup, anomalous heat flow from colder to hotter subsystems is accompanied by depletion of entanglement and constrained by the overall dynamics \cite{XianZhao2020}.
Moreover, QI bounds and chronology protection restrict the macroscopic controllability of such wormholes.

Formally, one can consider an effective two-reservoir model with temperatures $T_1$ and $T_2$ and heat currents
\begin{equation}
  J_{\mathrm{tot}} = J_{\mathrm{diff}} + J_{\mathrm{anom}},
\end{equation}
where $J_{\mathrm{diff}} \propto (T_1 - T_2)$ is the ordinary diffusive term and $J_{\mathrm{anom}}$ encodes wormhole-mediated anomalous flow.
The entropy production rate is
\begin{equation}
  \dot{S}_{\mathrm{prod}} = J_{\mathrm{tot}} \left( \frac{1}{T_2} - \frac{1}{T_1} \right).
\end{equation}
Even if $J_{\mathrm{anom}}$ temporarily dominates and drives $J_{\mathrm{tot}}$ opposite to the usual sign, the full backreaction and entanglement dynamics must be considered to evaluate $\dot{S}_{\mathrm{prod}}$.
The GSL suggests that, once geometric contributions are included, the total generalized entropy remains nondecreasing.
This wormhole-driven two-reservoir setup is thus a second concrete realization of GET beyond the black-hole engine.

\subsection{Retrocausality and two-time boundary conditions}

A more radical route to entropy manipulation would use two-time boundary conditions.
Suppose the universe is described by a path integral with both an initial and final density matrix, $\rho_{\mathrm{i}}$ and $\rho_{\mathrm{f}}$.
The probability for a coarse-grained history $\alpha$ is then determined by both boundary conditions.
In TSVF language, the universe is described by a pair $(|\Psi_{\mathrm{i}}\rangle, |\Psi_{\mathrm{f}}\rangle)$.

If $\rho_{\mathrm{f}}$ is highly non-generic and very low-entropy, it can in principle bias the statistics of intermediate histories toward lower-entropy configurations.
However, this does not violate unitarity; it simply means that the actual universe has very special final conditions, in addition to special initial conditions.
Such a model can be written schematically as
\begin{equation}
  P[\alpha] \propto \operatorname{Tr} \left( C_\alpha \rho_{\mathrm{i}} C_\alpha^\dagger \rho_{\mathrm{f}} \right),
\end{equation}
where $C_\alpha$ is a class operator for history $\alpha$.
Unless $\rho_{\mathrm{f}}$ is chosen in an ad hoc way, this does not provide an operational protocol that agents inside the universe can exploit to reverse entropy.
Instead, it reinterprets the arrow of time as arising from boundary conditions at both ends.

In any case, once one conditions on both boundaries, the GSL for generalized entropy along appropriate horizons remains a local statement about the allowed evolution of $S_{\mathrm{gen}}$ between cuts.
Time-symmetric or retrocausal formulations therefore do not, by themselves, provide a mechanism for global entropy decrease.
They relocate the explanatory burden from dynamics to boundary conditions.

\subsection{A worked example of GET: matter--black-hole engine}

To make GET concrete, consider an internal agent operating a quasi-static engine between a macroscopic matter reservoir at temperature $T_{\mathrm{m}}$ and a Schwarzschild black hole with Hawking temperature $T_{\mathrm{H}}$.
We assume $T_{\mathrm{m}} > T_{\mathrm{H}}$, which is generic for astrophysical black holes.

Let a small amount of heat $\delta Q$ be extracted from the matter reservoir and dumped into the black hole, with negligible work cost and negligible change in $T_{\mathrm{m}}$ and $T_{\mathrm{H}}$.
The entropy changes are
\begin{align}
  \Delta S_{\mathrm{m}} &= - \frac{\delta Q}{T_{\mathrm{m}}}, \\
  \Delta S_{\mathrm{BH}} &= \frac{\delta Q}{T_{\mathrm{H}}},
\end{align}
where the second relation follows from the first law $\delta M = T_{\mathrm{H}} \delta S_{\mathrm{BH}}$ with $\delta M = \delta Q$ in our units.

The generalized entropy change for the relevant horizon is then
\begin{equation}
  \Delta S_{\mathrm{gen}} = \Delta S_{\mathrm{BH}} + \Delta S_{\mathrm{out}}
  = \frac{\delta Q}{T_{\mathrm{H}}} - \frac{\delta Q}{T_{\mathrm{m}}}
  = \delta Q \left( \frac{1}{T_{\mathrm{H}}} - \frac{1}{T_{\mathrm{m}}} \right).
\end{equation}
For positive $\delta Q$ and $T_{\mathrm{m}} > T_{\mathrm{H}}$, we have $\Delta S_{\mathrm{gen}} > 0$.
At the same time, the matter sector experiences a genuine entropy decrease $\Delta S_{\mathrm{m}} < 0$.
This is a simple instance of GET: an internal agent has ``mined'' entropy from the matter reservoir and stored it in the black hole, with a strict increase in the generalized entropy.

If one attempts to reverse the roles, arranging for net heat flow from the colder black hole to the hotter matter reservoir, one must supply external work.
Even in that case, the Clausius inequality applied to the combined system of reservoirs and engine implies
\begin{equation}
  \Delta S_{\mathrm{BH}} + \Delta S_{\mathrm{m}} \geq 0,
\end{equation}
so the generalized entropy cannot be reduced by such an engine.
The detailed accounting simply becomes more complicated as one includes the work reservoir and radiation, but GET still describes a redistribution of entropy rather than a global reduction.

\subsection{A sectoral inequality for GET}

Using the decomposition
\begin{equation}
  S_{\mathrm{gen}} = \frac{A}{4 G \hbar} + S_{\mathrm{m}} + S_{\mathrm{rad}} + S_{\mathrm{corr}},
\end{equation}
a general GET process yields the identity
\begin{equation}
  \Delta S_{\mathrm{gen}} =
  \Delta\!\left(\frac{A}{4 G \hbar}\right)
  + \Delta S_{\mathrm{m}}
  + \Delta S_{\mathrm{rad}}
  + \Delta S_{\mathrm{corr}}.
\end{equation}
Imposing the GSL,
\begin{equation}
  \Delta S_{\mathrm{gen}} \geq 0,
\end{equation}
we obtain
\begin{equation}
  \Delta S_{\mathrm{m}} + \Delta S_{\mathrm{rad}}
  \geq
  - \Delta\!\left(\frac{A}{4 G \hbar}\right) - \Delta S_{\mathrm{corr}}.
  \label{eq:GETbound}
\end{equation}

\begin{theorem}[Global Entropy Transport bound]
\label{thm:GET}
Assume a single semiclassical spacetime satisfying the GSL for the generalized entropy $S_{\mathrm{gen}}$ of the relevant horizons, with matter and radiation described by unitary QFT on that background.
For any finite process that begins and ends in semiclassical states for which $S_{\mathrm{gen}}$ is well-defined, the change in the combined entropy of matter and radiation satisfies the bound
\begin{equation}
  \Delta S_{\mathrm{m}} + \Delta S_{\mathrm{rad}}
  \geq
  - \Delta\!\left(\frac{A}{4 G \hbar}\right) - \Delta S_{\mathrm{corr}},
\end{equation}
where $S_{\mathrm{corr}} = -I(\mathrm{m}:\mathrm{rad})$ is minus the mutual information between matter and radiation.
\end{theorem}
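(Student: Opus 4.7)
The plan is to establish the inequality as an immediate corollary of the generalized second law applied to a one-line algebraic rearrangement. I would first fix a pair of Cauchy slices $\Sigma_i$ and $\Sigma_f$ bracketing the process, together with a choice of the ``relevant horizon'' whose cross-sectional areas $A_i$ and $A_f$ are well-defined on each slice. The hypothesis that the endpoints are semiclassical states for which $S_{\mathrm{gen}}$ is well-defined is exactly what makes $S_{\mathrm{m}}$, $S_{\mathrm{rad}}$, and $S_{\mathrm{corr}}$ simultaneously meaningful at both endpoints, using the factorization $\mathcal{H} = \mathcal{H}_{\mathrm{grav}} \otimes \mathcal{H}_{\mathrm{m}} \otimes \mathcal{H}_{\mathrm{rad}}$ assumed in Section~\ref{sec:GET}.

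The key algebraic step is the identity
\begin{equation*}
  S_{\mathrm{out}} = S_{\mathrm{m}} + S_{\mathrm{rad}} - I(\mathrm{m}:\mathrm{rad}) = S_{\mathrm{m}} + S_{\mathrm{rad}} + S_{\mathrm{corr}},
\end{equation*}
which follows from the definition $I(A:B) = S_A + S_B - S_{AB}$ applied to the reduced state $\rho_{\mathrm{out}} = \operatorname{Tr}_{\mathrm{grav}} \rho_{\mathrm{tot}}$ regarded as a state on $\mathcal{H}_{\mathrm{m}} \otimes \mathcal{H}_{\mathrm{rad}}$. Evaluating this identity at both endpoints and subtracting, together with the definition $S_{\mathrm{gen}} = A/(4 G \hbar) + S_{\mathrm{out}}$, yields
\begin{equation*}
  \Delta S_{\mathrm{gen}} = \Delta\!\left(\frac{A}{4 G \hbar}\right) + \Delta S_{\mathrm{m}} + \Delta S_{\mathrm{rad}} + \Delta S_{\mathrm{corr}}.
\end{equation*}
Invoking the GSL, $\Delta S_{\mathrm{gen}} \geq 0$, which is one of the standing assumptions listed in Section~1.1, and rearranging gives exactly the stated bound.

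The main obstacle is not the algebra but the careful justification of the hypotheses underlying each term. Specifically, one must argue that the semiclassical tensor factorization $\mathcal{H} = \mathcal{H}_{\mathrm{grav}} \otimes \mathcal{H}_{\mathrm{m}} \otimes \mathcal{H}_{\mathrm{rad}}$ and the ``exterior'' of the chosen horizon can be defined compatibly on $\Sigma_i$ and $\Sigma_f$; that the formally UV-divergent field-theoretic entropies enter only through differences $\Delta S_{\mathrm{m}}$, $\Delta S_{\mathrm{rad}}$, and $\Delta S_{\mathrm{corr}}$, so that the whole inequality is finite and independent of the subtraction scheme (this is why the theorem is stated in terms of $\Delta$ rather than absolute entropies); and that the relevant horizon persists, without discontinuous topology change or nucleation, between the two slices, so that GSL can be invoked in its usual local form. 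Since all of these conditions are already built into the assumptions of Section~1.1, the theorem collapses to a clean unpacking of the GSL in sectoral, information-theoretic language via the mutual-information identity.
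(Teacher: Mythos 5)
Your proposal is correct and follows essentially the same route as the paper's own proof: decompose $S_{\mathrm{gen}}$ into area, sectoral, and correlation terms via the mutual-information identity, take the difference between endpoints, and rearrange the GSL inequality $\Delta S_{\mathrm{gen}} \geq 0$. The additional care you take in justifying the tensor factorization, the cancellation of UV divergences in the $\Delta$-quantities, and the persistence of the horizon goes beyond what the paper writes down explicitly, but the core argument is identical.
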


\begin{proof}
By definition,
\begin{equation}
  S_{\mathrm{gen}} = \frac{A}{4 G \hbar} + S_{\mathrm{m}} + S_{\mathrm{rad}} + S_{\mathrm{corr}}.
\end{equation}
Taking the difference between initial and final states gives
\begin{equation}
  \Delta S_{\mathrm{gen}} =
  \Delta\!\left(\frac{A}{4 G \hbar}\right)
  + \Delta S_{\mathrm{m}}
  + \Delta S_{\mathrm{rad}}
  + \Delta S_{\mathrm{corr}}.
\end{equation}
The GSL asserts $\Delta S_{\mathrm{gen}} \geq 0$ for the horizons in question, so rearranging yields the claimed inequality.
\end{proof}

Equation~\eqref{eq:GETbound} is therefore a simple but useful constraint: any attempt to engineer a large net reduction in the combined entropy of matter and radiation must be accompanied either by a compensating increase in horizon area (a growth of gravitational entropy) or by a compensating change in correlations encoded in $S_{\mathrm{corr}}$.

Subadditivity and strong subadditivity of entropy restrict how negative $S_{\mathrm{corr}}$ can become.
In particular, for two subsystems one has the mutual information
\begin{equation}
  0 \leq I(\mathrm{m}:\mathrm{rad}) = S_{\mathrm{m}} + S_{\mathrm{rad}} - S_{\mathrm{out}},
\end{equation}
and hence $S_{\mathrm{corr}} = S_{\mathrm{out}} - S_{\mathrm{m}} - S_{\mathrm{rad}} = - I(\mathrm{m}:\mathrm{rad}) \leq 0$.
Thus $\Delta S_{\mathrm{corr}}$ is bounded below by $-2\min(S_{\mathrm{m}},S_{\mathrm{rad}})$, and large changes require creating or destroying significant mutual information.
In practice, arranging a large negative $\Delta S_{\mathrm{corr}}$ in a controlled way is extremely difficult, since it requires coherent manipulation of highly entangled degrees of freedom while respecting QNEC and QFC constraints \cite{Bousso2016,BoussoQNEC2016}.

Theorem~\ref{thm:GET} therefore functions as a no-go result for ``entropy bleaching'' devices that operate entirely within the matter and radiation sectors.
Even exotic setups employing wormholes or advanced fields must pay the cost encoded in the change of horizon area and correlations.

\subsection{Operational limitations for internal agents}

An internal agent in our universe might imagine various devices aiming to globally reduce entropy:

\begin{itemize}
  \item \emph{Entropy-mining wormholes}: use a traversable wormhole to export high-entropy radiation from a local region into some inaccessible sector, while importing low-entropy matter.
  \item \emph{Advanced-signal refrigerators}: exploit advanced fields or retrocausal correlations to construct Maxwell-demon-like feedback protocols that coordinate microscopic degrees of freedom in the past.
  \item \emph{Black-hole erasers}: feed arbitrary data into black holes and attempt to recover a purified, low-entropy radiation state using sophisticated decoding and island-engineering techniques.
\end{itemize}

In each case, the GET framework combined with the GSL and Theorem~\ref{thm:GET} constrains what can be achieved.
Wormhole-based devices must respect quantum inequalities and QNEC, which severely limit sustained negative energy densities and exotic stress tensors.
Advanced-signal refrigerators cannot alter the global unitary evolution and therefore merely reshuffle entropy between the agent, the controlled system, and the environment; the total generalized entropy remains bounded below by its initial value.
Black-hole erasers run into the fact that, once islands are taken into account, any purification of Hawking radiation is balanced by corresponding entropy changes in the gravitational sector, as we discuss next.

From the standpoint of an internal experimenter, GET says that one can mine entropy from some subsystems and export it into others, including horizon degrees of freedom, but cannot reduce the generalized entropy defined on appropriate quantum extremal surfaces.
The arrow of time can be locally sharpened or blurred, yet it cannot be globally reversed by any operational protocol consistent with semiclassical gravity and quantum information inequalities.

\section{Holographic generalized entropy, islands, and the Page curve}

\subsection{Quantum extremal surfaces and the island formula}

In holographic theories, the fine-grained entropy of a boundary region $A$ is computed by extremizing a generalized entropy functional over bulk surfaces homologous to $A$.
The original Ryu--Takayanagi (RT) proposal for static setups \cite{RyuTakayanagi2006} and its covariant Hubeny--Rangamani--Takayanagi (HRT) extension \cite{HubenyRangamaniTakayanagi2007} associate to $A$ a bulk extremal surface $\gamma_A$ with area $A(\gamma_A)$, such that
\begin{equation}
  S(A) = \frac{A(\gamma_A)}{4 G \hbar}.
\end{equation}
Engelhardt and Wall generalized this to \emph{quantum extremal surfaces} (QES), where one extremizes the generalized entropy
\begin{equation}
  S_{\mathrm{gen}}(\Sigma) = \frac{A(\Sigma)}{4 G \hbar} + S_{\mathrm{bulk}}[\Sigma]
\end{equation}
over codimension-two surfaces $\Sigma$ and selects the minimal extremum \cite{EngelhardtWall2015}.
Here $S_{\mathrm{bulk}}[\Sigma]$ is the von Neumann entropy of quantum fields in the bulk region outside $\Sigma$.

In evaporating black hole setups coupled to an external bath, these ideas lead to the island formula for the entropy of Hawking radiation.
Denoting by $R$ the radiation region and by $I$ a putative island inside the black hole, one finds
\begin{equation}
  S(R) = \min_{\partial I} \; \mathrm{ext}_{\partial I}
  \left[
    \frac{A(\partial I)}{4 G \hbar} + S_{\mathrm{bulk}}(R \cup I)
  \right],
\end{equation}
where the minimization runs over QES candidates $\partial I$ \cite{Almheiri2019,Almheiri2020,Penington2020}.
The quantity inside the brackets is precisely a generalized entropy of the form we have used throughout.
The appearance of islands ensures that $S(R)$ follows the unitary Page curve: it rises initially as Hawking quanta accumulate, then turns over at the Page time when islands become dominant and begins to decrease.

Most explicit computations of the Page curve and islands have been performed in AdS or near-AdS settings, often with lower-dimensional toy models.
Extending a fully microscopic QES-based description of entropy to realistic cosmological spacetimes is an active area of research.
Here we extrapolate the generalized-entropy intuition provided by QES and islands to semiclassical cosmological settings, using it as a guide rather than as a rigorously established microscopic derivation.

\subsection{GET, the Page curve, and non-decrease of generalized entropy}

The island formula and the Page curve clarify how fine-grained entropy and generalized entropy coexist.
The fine-grained entropy of the combined system of black hole plus radiation is constant under unitary evolution, while the entropy of the radiation subsystem $S(R)$ rises and then falls.
The generalized entropy associated with the relevant QES surfaces, however, remains nondecreasing along appropriate null foliations, in harmony with the GSL and the QFC \cite{Bousso2016,EngelhardtWall2015,Almheiri2019}.

Our GET framework is naturally interpreted in this language.
The gravitational sector in Sec.~\ref{sec:GET} can be thought of as including QES degrees of freedom and island regions, while $S_{\mathrm{m}}$ and $S_{\mathrm{rad}}$ capture coarse-grained matter and radiation entropies in the bath.
Theorem~\ref{thm:GET} then constrains what an internal agent can do to $S_{\mathrm{m}}$ and $S_{\mathrm{rad}}$ without violating the monotonicity of the generalized entropy evaluated on QES surfaces.

In particular, schemes that attempt to engineer a Page-curve-like decrease of radiation entropy earlier than dictated by the usual island transition would require either manipulating the location of QES surfaces or altering $S_{\mathrm{bulk}}$ in a way that still respects QNEC and QFC.
There is no indication in current holographic calculations that such manipulations can lead to a net reduction of $S_{\mathrm{gen}}$; instead, they change which QES dominates the entropy calculation without overthrowing the underlying generalized entropy monotonicity \cite{Almheiri2019,Almheiri2020,Penington2020}.

\section{Cosmological arrows, Past Hypothesis, and outlook}

Past Hypothesis frameworks locate the origin of the thermodynamic arrow in a special low-entropy macrostate of the early universe, often formalized as a constraint on the initial coarse-grained distribution over microstates \cite{Albert2000,Price1996,CarrollChen2004,Carroll2010,Penrose1979}.
The Weyl curvature hypothesis \cite{Penrose1979} and subsequent work on gravitational entropy \cite{GibbonsHawking1977,EganLineweaver2010} emphasize that the absence of gravitational clumping at early times plays a central role in making that low-entropy state extraordinarily special.

Our analysis does not challenge these accounts.
Rather, it addresses a different question: given such a low-entropy beginning and the subsequent dynamics described by semiclassical gravity and QFT, what can internal agents hope to accomplish in terms of reversing or significantly reducing entropy thereafter?

Within the GET framework, the answer appears to be restrictive.
Once we take seriously the GSL, quantum focusing, quantum energy conditions, and holographic generalized entropy, any dynamical process available to internal agents can only redistribute entropy among sectors.
Black holes, wormholes, and retrocausal correlations permit intricate patterns of Global Entropy Transport, in which $S_{\mathrm{m}}$ and $S_{\mathrm{rad}}$ can be locally reduced at the expense of increasing horizon area and correlation terms.
However, the generalized entropy $S_{\mathrm{gen}}$, evaluated on appropriate horizons or QES surfaces, is constrained to be nondecreasing.

In this sense, our results complement Past Hypothesis explanations.
The Past Hypothesis explains why a thermodynamic arrow exists at all: it posits a special low-entropy initial macrostate.
The GET framework and Theorem~\ref{thm:GET} instead quantify the robustness of that arrow against manipulation by internal agents, even ones equipped with black-hole engineering, wormholes, and sophisticated quantum information processing.
They can generate pockets of low entropy, export entropy to gravitational sectors, and cleverly manage correlations, but they cannot globally roll back the generalized entropy inherited from the initial low-entropy boundary condition.

Future progress in quantum gravity and cosmology may refine these conclusions.
A full microscopic understanding of gravitational entropy, including in non-asymptotically AdS cosmologies, could tighten the link between QES-based generalized entropy and cosmological arrows.
Conversely, novel quantum energy conditions or nonperturbative effects might reveal new channels for entropy transport that are invisible in current semiclassical treatments.
As things stand, however, the GSL, QFC, QNEC, and island-based Page curve computations together provide strong evidence that the universal thermodynamic arrow of time in a single connected universe cannot be dynamically reversed.

\begin{singlespace}

\newpage

\end{singlespace}

\end{document}